\newcounter{centredequ}
\newenvironment{centredequ}{\refstepcounter{equation}\hfill\begin{math}}{\end{math}\hfill$(\theequation)$\par\noindent}
    \crefname{centredequ}{eq.}{eqs. }
    \Crefname{centredequ}{Eq.}{Eqs. }
\newtheorem{prop}{Lemma}
\newtheorem{theorem}{Theorem}
\newcommand{\name}{PDID\xspace}
\newcommand{\names}{PDIDs\xspace}
\newcommand{\nameslong}{Password-authenticated Decentralized Identities\xspace}
\newcommand{\myparagraph}[1]{\noindent{\bf #1}}
\title{\nameslong}
\author{
    Pawel Szalachowski\\
    Singapore University of Technology and Design
}
\date{}
\begin{document}


\maketitle

\begin{abstract}
    Password-authenticated identities, where users establish username-password
    pairs with individual servers and use them later on for authentication, is
    the most widespread user authentication method over the Internet.  Although
    they are simple, user-friendly, and broadly adopted, they offer insecure
    authentication and position server operators as trusted parties, giving them
    full control over users' identities.  To mitigate these limitations, many
    identity systems have embraced public-key cryptography and the concept of
    decentralization.  All these systems; however, require users to create and
    manage public-private keypairs.  Unfortunately, users usually do not have
    the required knowledge and resources to properly handle cryptographic
    secrets, which arguably contributed to the failures of many end-user
    public-key infrastructures (PKIs).  In fact, as of today, no end-user PKI,
    able to authenticate users to web servers, has a significant adoption rate.
    
    In this paper, we propose \nameslong (\names), an identity and
    authentication framework where users can register their self-sovereign
    username-password pairs and use them as universal credentials.  Our system
    provides a global namespace, human-meaningful usernames, and resilience
    against username collision attacks.  A user's identity can be used to
    authenticate the user to any server without revealing that server
    anything about the password, such that no offline dictionary attacks are
    possible against the password.  We analyze \names and implement it using
    existing infrastructures and tools. We report on our implementation and
    evaluation.

\end{abstract}

\section{Introduction}
\label{sec:intro}
 Passwords have a particularly long history as a means of authenticating users to
 computer systems~\cite{bonneau2015passwords}.  Despite their
 inherent limitations and drawbacks, they are surprisingly robust to any
 techniques that try to disrupt them~\cite{bonneau2012quest}.  
User identities are
usually expressed as user-selected usernames and are
authenticated with passwords.  Usernames and their corresponding
password-related information are shared with a server upon registration. Then to
authenticate, a user sends its username-password pair to the server which checks
whether the pair matches the registered record.  Such identities are local (to
the server), but single sign-on systems, such as
OpenID~\cite{recordon2006openid}, extend them allowing a registered identity
to be reused `globally' for authenticating to other servers without  a need of
creating a new identity.  Although convenient for users, such
identities have significant limitations.  Most importantly, they are controlled
by their providers (i.e., the servers that have registered them).  Thus, a
user who should be an owner of its identity has to trust that the server
operator manages (and will manage) the identity appropriately.  Moreover,
systems like OpenID, allow identity providers to undermine users' privacy by
learning which websites and when users are connecting to.  Finally, the
currently dominating password-based authentication method requires users to
send their passwords in plaintext, making them  prone to various attacks.

To provide better security guarantees and enable new applications (like
signatures), public-key infrastructures (PKIs) were introduces, where
trusted authorities verify identities and assert bindings between them and their
public keys in digital certificates~\cite{cooper2008internet}.  Identities in
these systems are human-meaningful and global (usually based upon DNS), but
their security relies on a set of globally trusted authorities and the security
of the namespace they express identities in (i.e., DNS).  In past, we have
witnessed multiple attacks on authorities that resulted in impersonation attacks
on high-profile websites~\cite{leavitt2011internet}, where a trusted authority
could easily `collide' an identity by simply creating a new certificate.  To
eliminate globally trusted authorities, the idea of distributed PKIs was
presented~\cite{rivest1996sdsi}. In so called, web-of-trust
PKIs~\cite{abdul1997pgp} users create peer-to-peer trust assertions and make
trust decisions basing on them.  An important disadvantage of distributed PKIs
is that they either still rely on DNS or express identities in local namespaces, thus
cannot be used universally.  Self-certifying
identifiers~\cite{mazieres1999separating} propose names that are
cryptographically-derived from public keys. Such a namespace is global and
secure, but generated names are pseudorandom, thus it is difficult to memorize
and use them by human beings. 
The limitations of these systems led to an observation, referred to
as Zooko's trilemma~\cite{wilcox2001names}, and a related informal conjecture
that no naming system can simultaneously provide human-meaningful, global, and
secure names.  Although naming systems built upon blockchain
platforms~\cite{swartz2011squaring,loibl2014namecoin} seem to refute this
conjecture, they require to associate names with public keys.  Thus, similar
to other PKIs, they rather target servers that, unlike end-users,  are
capable to manage their cryptographic keys.

In this work, we make the following contributions.
We propose \nameslong (\name), a system that removes the above limitations.
Up to our best knowledge, it is the first identity and authentication framework
which allows users to establish human-meaningful and global
password-authenticated identities that are also resilient to collision attacks.
We instantiate \names with a combination of a blockchain platform offering
confidential smart contracts and a modified
password-authenticate key exchange protocol allowing users to use their passwords for authentication.  We present \names in the client-server
setting, where a user authenticates to the server with its username-password
pair, but the scheme can be extended to other models and applications.  We
discuss the security of our framework and present its implementation and
evaluation which indicate the feasibility of our protocol (the most common operation requires around 20 ms plus network latency).

\section{Background and Preliminaries}
\label{sec:pre}
\label{sec:pre:crypto}
\label{sec:pre:blockchain}
\label{sec:pre:pass}

\newcommand{\fk}[2]{\ensuremath{\mathsf{f}_{#1}(#2)}}
\newcommand{\getsR}{\ensuremath{\xleftarrow{R}}}
\newcommand{\Hash}[1]{\ensuremath{\mathsf{H}(#1)}}
\newcommand{\HashPrim}[1]{\ensuremath{\mathsf{H'}(#1)}}
\newcommand{\AEnc}[2]{\ensuremath{\mathsf{AEnc}_{#1}(#2)}}
\newcommand{\ADec}[2]{\ensuremath{\mathsf{ADec}_{#1}(#2)}}
\newcommand{\Gen}{\ensuremath{\mathsf{Gen}()}}
\newcommand{\PEnc}[2]{\ensuremath{\mathsf{PEnc}_{#1}(#2)}}
\newcommand{\PDec}[2]{\ensuremath{\mathsf{PDec}_{#1}(#2)}}
\newcommand{\Zq}[2]{\ensuremath{Z_q}}

\myparagraph{Notation and Cryptography}
Throughout the paper we use the following notation
\begin{compactitem}
	\item $G$ denotes a finite cyclic group of order $q$ with a generator $g\in G$;
	\item $r\getsR S$  denotes that $r$ is an element randomly selected
		from the set $S$;
	\item \fk{k}{m} is a keyed-pseudorandom function that for key $k$ and
		message $m$ outputs a pseudorandom string from $\{0,1\}^n$;
	\item \Hash{m} and \HashPrim{m} are cryptographic hash functions
		that for message $m$ output values from $\{0,1\}^n$ and $G$, respectively;
	\item \AEnc{k}{m} is an encryption algorithm of an authenticated
		encryption scheme, that for key $k$ and message $m$ outputs the
		corresponding ciphertext $c$;
    \item \ADec{k}{c} is the corresponding decryption algorithm,
        decrypting the message $m$ from the ciphertext $c$ given the key
        $k$, or failing with incorrect input;
    \item \Gen\xspace is a public-key generation algorithm, returning a
        private-public keypair $\langle sk,pk\rangle$;
	\item \PEnc{pk}{m} is a public-key encryption algorithm, that produces
		ciphertext $c$ for the given public key $pk$ and message $m$;
	\item \PDec{sk}{c} is the corresponding public-key decryption algorithm,
		recovering the message $m$ given the ciphertext $c$ and the
        corresponding secret key $sk$, or failing with incorrect input.
\end{compactitem}

\myparagraph{Password Authentication}
Password-based
authentication on the Internet is dominated by the following method (or its
slight modification):
\begin{compactitem}
    \item \textit{Registration.} The user registers its identity by providing,
        via a secure channel, a username $U$ and a password $pwd$ to the server.
        The server selects a random salt $s\getsR{\{0,1\}}^n$ and
        stores the mapping 
	$U: \langle s, \Hash{s, pwd}\rangle.$
    \item \textit{Authentication.} To authenticate the user sends
        its username-password pair $\langle U, pwd'\rangle$ to the server, which
        identifies the mapping, and checks if
		$\Hash{s, pwd'}\overset{?}{=}\Hash{s, pwd}.$
\end{compactitem}

Despite its popularity and wide-spread adoption, this protocol has a major flaw
since in every authentication the password is sent in plaintext.  This limitation
requires a secure channel between the parties for each authentication, but even
then, it makes passwords vulnerable to multiple attack vectors (like server-side
malware).
To address this limitation, Bellovin and Merritt~\cite{10.5555/882488.884178}
proposed the first password-authenticated key exchange (PAKE) protocol, where
two parties can securely establish a high-entropy secret key from the memorable
password they share (with an already establish secret key,
the user can easily authenticate to the server).
Since then, there have been proposed multiple PAKE protocols with various
efficiency and security
properties; 
however, most of them require either to send salt in plaintext (facilitating offline
precomputation attacks) 
or to store effective
passwords by servers (allowing adversaries to instantly compromise all passwords
after the server's compromise).  Only recently, Jarecki et al. proposed
OPAQUE~\cite{jarecki2018opaque}, a PAKE protocol that removes these issues and
introduces low transmission and computation overheads.
OPAQUE bases on the oblivious pseudorandom function (OPRF) defined as
\newcommand{\Fk}[2]{\ensuremath{\mathsf{F}_{#1}(#2)}}
\begin{centredequ}
    \label{eq:oprf}
\Fk{k}{m} = \Hash{m, (\HashPrim{m})^k}.
\end{centredequ}

To register, a user sends to the server its username-password pair $\langle U,
pwd\rangle$.
The server, after receiving the request, computes the following
$k_s\getsR\Zq\xspace{}; k\gets\Fk{k_s}{pwd}; p_s\getsR\Zq\xspace{}; P_s\gets g^{p_s}; 
p_u\getsR\Zq\xspace{}; P_u\gets g^{p_u}; c\gets\AEnc{k}{p_u,P_u,P_s};$
and saves $\langle k_s,p_s,P_s,P_u,c\rangle$ as the \textit{password metadata}
corresponding to the username $U$.
After the registration is complete, the user can use its credentials to
authenticate to the server. In order to do so, the user computes
$r\getsR\Zq\xspace{};\alpha\gets(\HashPrim{pwd})^r;
x_u\gets\Zq\xspace{};X_u\gets g^{x_u};$
and sends $U, \alpha, X_u$ to the server.
Upon receiving this message, the server computes
$
x_s\getsR\Zq\xspace{};X_s\gets g^{x_s}; \beta \gets \alpha^{k_s};
$
and sends $\beta, X_s, c$ back to the user who then computes
$
k\gets\Hash{pwd, \beta^{1/r}};
\langle p_u, P_u, P_s\rangle\gets\ADec{k}{c}.
$
Now, the user (with $p_u, x_u, P_s, X_s$), and the server (with
$p_s, x_s, P_u, X_u$) can run a key exchange protocol, like
HMQV~\cite{krawczyk2005hmqv}, to establish a shared key.

\myparagraph{Confidential Smart Contracts}
Blockchain platforms, initiated by Bitcoin~\cite{nakamoto2019bitcoin}, combine
append-only cryptographic data structures with a distributed consensus
algorithm.  They allow to build highly-available, censorship-resistant,
transparent, and verifiable systems minimizing trust in third parties.
Initially, proposed for peer-to-peer payments, blockchain platforms were
subsequently 
enriched by smart contracts, which allow anyone to deploy code with any
(implementable) logic
and interact with this code
over the blockchain platform.
However, the transparency of those systems can also be seen as an important
disadvantage 
for privacy-demanding
applications and users. 
It also turned out that 
designing a platform for
confidential smart contracts turned out to be a more challenging task. Only
recently we have witnessed some promising proposals providing confidentiality in
smart contracts via
cryptographic tools (like commitments, multi-party computation,
or zero-knowledge computation integrity proofs). 
Unfortunately, as of today, these platforms introduce significant
efficiency bottlenecks, effectively prohibiting the deployment of sophisticated
smart contracts.

Blockchain
platforms that leverage a trusted execution environment
(TEE) are solutions with a more practical focus. 
The TEE technology, usually realized with Intel SGX, a) allows to run
code within \textit{secure enclaves} which cannot be compromised even by the system
operator, b) facilitate \textit{remote attestation}, able to prove which enclave code is
running on a remote machine, and c) offer \textit{sealing} which enables
enclaves to encrypt their secret data and to deposit it on untrusted storage
This
toolset 
allowed to provide high-performant and
confidential smart contracts, where contracts, their data, and all
transactions are confidential, despite being ordered, validated, and recorded on
a public ledger.  For instance, Brandenburger et
al.~\cite{brandenburger2018blockchain} extend Hyperledger
Fabric~\cite{androulaki2018hyperledger} by confidential smart contracts.
Their architecture distinguishes two types of enclaves: one responsible for
verifying the blockchain state integrity, and another for executing actual smart
contracts confidentially.  Nodes can join the system via remote
attestation and signing up to a special enclave registry.  For each created
smart contract, a dedicated keypair is generated and published. This keypair is
used to protect the contract and uniquely identify it. 
%

Due to  practical reasons, we instantiate \names with a
platform that bases on the TEE assumption; however, with the continuing progress of
platforms basing upon cryptographic assumptions,  we do not see major obstacles
in implementing \names with such a platform.  We assume that the platform
exposes public keys (e.g., as described previously) which allow users to
interact with platform contracts confidentially, by sending encrypted
transactions.  We do not assume a specific consensus protocol, but we require
that the platform allows to generate inclusion proofs for appended transactions.

\section{The \name Framework}
\label{sec:overview}

\subsection{Problem Formulation}
\label{sec:overview:prob}
Our goal is to propose a user identity and authentication framework. 
Although we present our system in the client-server model where only users are
authenticated, it can be adjusted to other models and authentication scenarios
(e.g., mutual authentication). 
We introduce the following parties.
\textbf{User} is a human being that wishes to use a service that requires
	authentication. The user inputs the service name $S$, as well as its
	username and password credentials $\langle U, pwd\rangle$, used for his
	authentication.  The user operates its client software to execute the
	actual authentication protocol.  
\textbf{Server} represents the service the user wants to use and which requires
	authentication.  The server participates in the authentication protocol
	and aims to verify the user's credentials (i.e., whether
	the user is an owner of the identity he claims).
%
We assume that the protocol parties have access to a blockchain platform
with confidential smart contracts (see \autoref{sec:pre:blockchain}).
For our framework, we seek the following properties.

\begin{compactdesc}
    \item[Human-meaningful Names:] identifiers are memorable by human
	    beings, such that users do not any need special infrastructures or
		devices to remember them. Ideally, they are user-selected usernames.
    \item[Global Namespace:] identifiers resolve to the same identity no
	    matter when and where they are being resolved. It
		guarantees that identities can be used universally.
    \item[(Collision-)Secure Names:] identities cannot be impersonated by
        forging identifiers (e.g., by hijacking or creating a new identity with
        the same identifier). In practice, this property requires that there is
        no trusted authority(ies) privileged to manage identities.  For
        instance, in authority-based PKIs, an authority can simply impersonate
        an identity by creating a malicious certificate claiming the same
        identifier, while in OpenID the identity provider can freely modify or
        use stored user records. 
	\item[Memorable Secrets:] users can use memorable secrets (i.e.,
		passwords) for authenticating their identities. This property
		is desired due to the popularity and advantages of passwords in
		user authentication.  It also enables to design an identity
		system which is seamless to end-users.
    \item[Secure Authentication:] the user authentication process does not
        reveal, even to the verifying server, the user's password or any
        information allowing to run offline dictionary attacks against the
        password.
\end{compactdesc}

The first three properties constitute Zooko's trilemma.  Although some PKI
systems refute the trilemma, they operate on public-private keypairs.  This may be
acceptable for servers able to manage their keys and certificates, but
it may be too demanding for users who prefer to authenticate using  username-password pairs.
Therefore, we introduce the fourth requirement on memorable secrets which also
allows to avoid user-side changes.  
In addition to secure authentication, we also require that the system is
efficient (i.e., does not introduce prohibitive overheads), and  keeps the
users' privacy on the same (or a similar) level as in today's authentication.

We assume an adversary whose goal is to authenticate on behalf of the user or to
learn the user's password.  We assume that the adversary cannot compromise
users' passwords, the used cryptographic primitives and protocols, and the
deployed blockchain platform.  The adversary can compromise a server to which the
user authenticates, and in this case, the adversary aims to attack the user's
password.  We require that the system not only protects from revealing users'
passwords but in particular, should not reveal to the adversary any information
which would enable her to run offline attacks on passwords.  We require that the
smart contract execution and the consensus protocol are secure, although we assume
that the adversary can compromise up to a tolerable number of blockchain nodes
(e.g., up to $1/3$ of all nodes in Byzantine consensus).  We assume that the
adversary operating such a compromised node may be interested in attacking the
system properties (e.g., attempting to run offline attacks).  Side channel
attacks, like timing, power, or cache attacks, are out of the scope of our
adversary model.

\subsection{Intuitions and Design Rationale}
\label{sec:overview:intuitions}



To illustrate our design process better, in this
section, we first consider a naive approach to the problem. 
In this protocol, we
introduce a trusted and highly available entity called the \textit{Global
Password Manager} (GPM). The GPM is responsible for handling identity
registrations, keeping all username and password pairs, and for assisting
servers with user authentication requests.
    To participate in the protocol, the user selects its username and
        password and registers this pair with the GPM, 
        which ensures that the username is unique
        and saves the credentials in its database.
    After the identity is established, the user can use it for 
        authentication, using the two-step protocol: 
        \begin{inparaenum}[a)]
            \item To authenticate, the user sends its credentials to
                a targeted server.
	    \item The server, to verify whether the credentials are valid,
		    contacts the GPM which
			checks and notifies the server whether the username-password pair is recorded in
			its database and notifies.
			Depending on the outcome, the server either successfully
			authenticates the user or terminates the protocol.
        \end{inparaenum}

With our assumption about the GPM, the protocol satisfies some of our challenging
requirements.  With a single trusted GPM who manages its local credentials
database, the protocol guarantees that identities are global and unique, and kept private.
Moreover, the credentials are universal, since a user registering once with the GPM,
can use its username and password to authenticate with any server supporting the
protocol.
Unfortunately, such a simple approach has two following fundamental issues.

Firstly, realizing such a trustworthy centralized GPM would be difficult in
practice.  Centralized systems are single points of failure (in terms of
security, privacy, and availability), introduce higher censorship risks, and can
be manipulated easier.  These limitations make a centralized GPM an unacceptable
design, especially in the context of universal and global identities.
GPM could be implemented using a highly-available infrastructure, like cloud computing, but then the system would be prone to censorship by the infrastructure's operator.
Therefore, one of our design decision is to implement the GPM's functionality as
a confidential smart contract (see \autoref{sec:pre:blockchain}) executed over a
decentralized blockchain platform.  In such a setting, the GPM is replaced by a
smart contract, keeping and managing credentials according to the rules
specified by the code. The system would also benefit from the blockchain
properties, providing verifiability and distributed control (mitigating
censorship), while keeping the state and the execution of the GPM smart contract
confidential.

Another major drawback of the naive protocol is that the server learns the
user's credentials.  This limitation is quite standard in centralized identity
systems (where a user and server share the user's effective password). However, with
universal decentralized identities it is unacceptable
since otherwise, only one malicious server could compromise universal credentials
which could be used for authentication to any other servers.  Therefore, our
goal is to realize the authentication process, such that the user can authenticate
to any server, but without revealing to the server any information allowing to learn the password.
To realize it, we extend the OPAQUE protocol 
to the three-party setting, where the
protocol is run between the user, the server, and the GPM.

\subsection{Details}
\label{sec:overview:details}
\begin{figure}[t!]
  \centering
  \includegraphics[width=\linewidth]{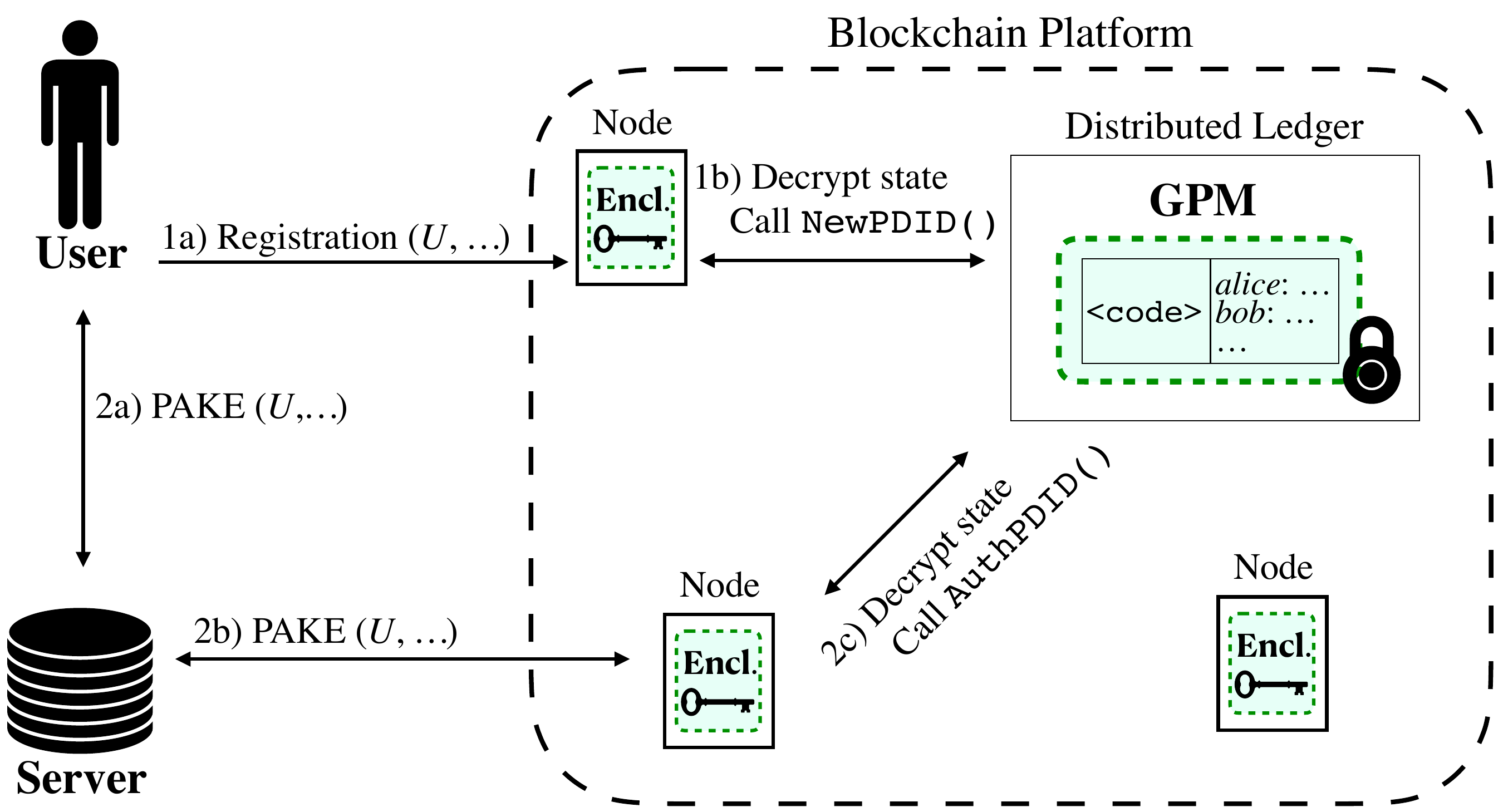}
  \caption{A high-level overview of the \name framework.}
  \label{fig:overview2}
\end{figure}

In \autoref{fig:overview2}, we show a high-level overview of our framework
instantiated with a TEE-based blockchain platform offering confidential smart
contracts. The GPM is implemented as a
smart contract with the encrypted state which can be accessed and modified only by trusted
enclaves.  A GPM instance is created before the protocol's deployment, and on
its creation, it is assigned with a unique blockchain's public-private keypair (see
\autoref{sec:pre:blockchain}), denoted as
\begin{centredequ}
    \label{eq:keys}
\langle pk_b, sk_b\rangle.
\end{centredequ}
GPM can be created by anyone, as long as its address and code (to potentially audit it) are publicly known and agreed upon. When created, GPM does not have to be maintained.
Interactions with the GPM are conducted via transactions that are sent to
(untrusted) blockchain nodes running (trusted) enclaves that execute the GPM's
code.  The confidentiality of these transactions is protected by public-key
encryption using the public key of the GPM instance (i.e., $pk_b$), and users
and servers are preloaded with this key.  The GPM consists of two main methods,
for \name registration and authentication, and the $users[]$ dictionary which
maps usernames to their password metadata (the dictionary is empty upon the
contract creation).

In the following, we describe the registration and authentication procedures.
For a simple description, in our protocols and pseudocodes, we omit some basic
sanity checks like parsing, checking whether received elements belong to the
group $G$, or decryption failures.  We emphasize; however, that if an error
occurs at one of those, the party should terminate the protocol in the failure
mode.

\subsubsection{Registration}
\begin{figure}[t!]
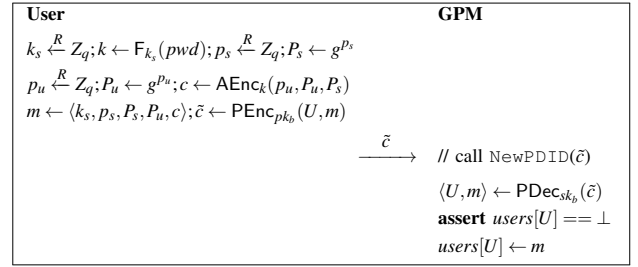

  \centering
	\begin{center}\scalebox{0.8}{\fbox{ \pseudocode{%
\textbf{User} \> \> \textbf{GPM} \\
k_s\getsR\Zq\xspace{}; k\gets\Fk{k_s}{pwd}; p_s\getsR\Zq\xspace{}; P_s\gets g^{p_s} \> \> \\
p_u\getsR\Zq\xspace{}; P_u\gets g^{p_u}; c\gets\AEnc{k}{p_u,P_u,P_s} \> \> \\
m\gets\langle k_s,p_s,P_s,P_u,c\rangle;\tilde{c}\gets\PEnc{pk_b}{U, m}\> \> \\
\> \sendmessagerightx[.8cm]{0}{\tilde{c}} \> \text{// call \texttt{NewPDID}($\tilde{c}$)} \\ 
\> \> \langle U, m\rangle \gets \PDec{sk_b}{\tilde{c}}\\
\> \> \textbf{assert } users[U] == \bot \\
\> \> users[U] \gets m
} } }
\end{center}
\caption{The registration process, where the user inputs its username $U$
    and password $pwd$, and $pk_b$ is the blockchain's public key.}
\label{fig:reg}
    \vspace{-0.5cm}
\end{figure}
The registration process is depicted in \autoref{fig:reg}.
To register its identity (i.e., \name), the user first prepares the password metadata which is computed using his
password $pwd$.  In OPAQUE, the
metadata is computed as shown in \autoref{fig:reg}, using OPRF (see
\autoref{eq:oprf}), modular exponentiations of secret values, and authenticated
encryption. The password metadata $m$ is:
\begin{centredequ}
    \label{eq:meta}
    \langle k_s,p_s,P_s,P_u,c\rangle.
\end{centredequ}
Originally, in OPAQUE, this phase is run by an entity storing the password
metadata (the GPM in our case). In the \name framework, it is generated on the
user's side since 
\begin{inparaenum}[a)]
    \item the process requires a good source of entropy which smart contracts,
        being fully deterministic, cannot themself provide,
    \item similarly, TEEs able to provide randomness, like Intel SGX, have been
        demonstrated to do it unreliably~\cite{aumasson2016sgx}, and
    \item in this setting, only the user knows the password $pwd$, thus,
        even in the case of a catastrophic attack (like compromised $sk_b$),
        the adversary learns the only one-way transformation of the
        password and not $pwd$ itself,
    \item the complexity of the GPM's code is minimized.
\end{inparaenum}

After the metadata $m$ is created, it is accompanied with the username $U$,
encrypted under the blockchain public key $pk_b$ as the ciphertext $\tilde{c}$,
and sent to the blockchain platform as a transaction triggering the registration
method of the GPM.  After the transaction is appended to the ledger, a
blockchain node, noticing the request, restores the GPM's code and state, and
calls its \texttt{NewPDID()} method. As presented in \autoref{fig:reg}, the code
first ensures that the username $U$ is not registered yet and then assigns the
password metadata to the username in the $users[]$ dictionary.  At this point,
the user's \name is established and ready for being used in the authentication
process.
(We note that even though the GPM's state is encrypted, storing the password
metadata instead of plain passwords, gives an additional level of security, since
even with a catastrophic event, like compromised $sk_b$ or the GPM, the
adversary still needs to run dictionary attacks again every single password.)

\subsubsection{Authentication}
\label{sec:overview:auth}
After a \name is registered, the user should be able to use its credentials.
We require that 
\begin{inparaenum}[a)]
    \item the user is able to use its credentials $\langle U, pwd\rangle$ to
        authenticate to any server (supporting the scheme), 
    \item the server is able to verify that the user knows the password
        corresponding to its claimed identity $U$,
\item the server does not learn $pwd$ or any information enabling to recover it (e.g., via offline
	dictionary attacks).
\end{inparaenum}

The last two requirements may seem contradictory since the server needs to
verify the identity without possessing its corresponding password metadata.  In
the \name framework, servers indeed do not store password metadata, which
instead is stored only as part of the GPM's confidential state. Then, 
to satisfy these requirements, we extend the OPAQUE authentication protocol to
the three-party setting, where the GPM's trusted code assists the server in
verifying the user's credentials.  When abstracting the server and the GPM as a
single entity, they essentially execute the server's side of the OPAQUE
authentication; however, since the server alone does not have any
password-related information it has to communicate with the GPM. The details of
the \name authentication process are presented in \autoref{fig:auth} and described below.

\begin{figure*}[t!]
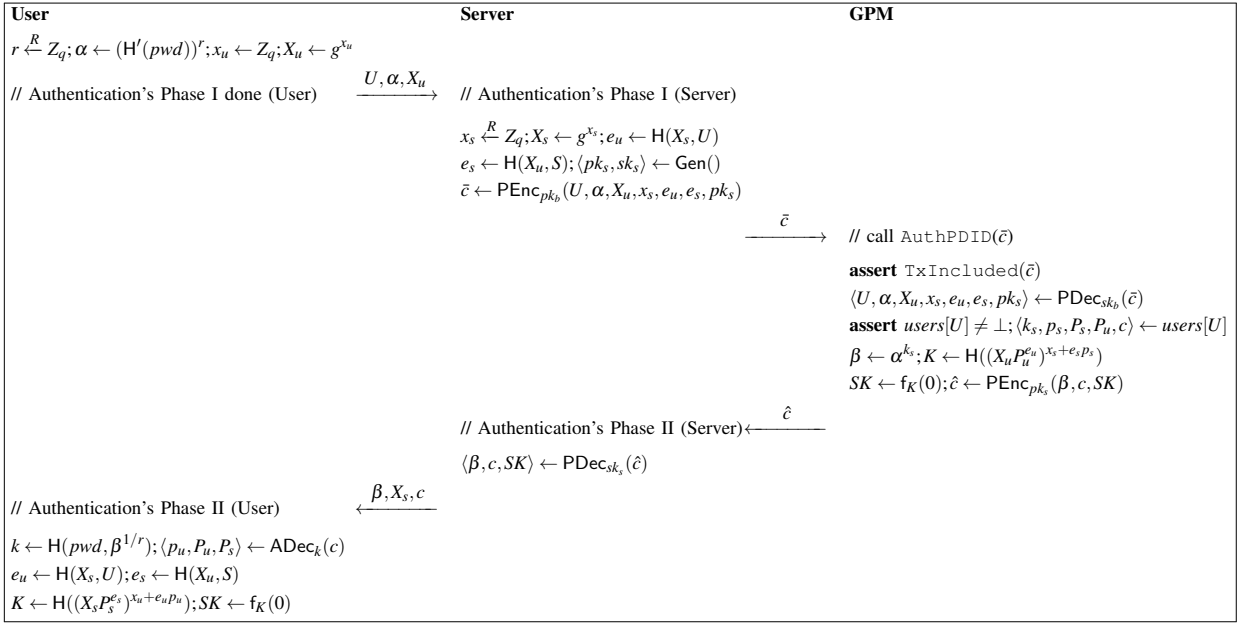

    \begin{center} \scalebox{0.8}{\fbox{\pseudocode{%
\textbf{User} \< \< \textbf{Server} \< \< \textbf{GPM} \\ 
	r\getsR\Zq\xspace{};\alpha\gets(\HashPrim{pwd})^r;x_u\gets\Zq\xspace{};X_u\gets g^{x_u} \< \< \< \< \\
	\text{// Authentication's Phase I done (User)} \< \sendmessagerightx[1.2cm]{0}{U, \alpha, X_u} \< \text{// Authentication's Phase I (Server)} \< \< \\
	\< \< x_s\getsR\Zq\xspace{};X_s\gets g^{x_s}; e_u\gets\Hash{X_s,U} \< \< \\
	\< \< e_s\gets\Hash{X_u,S};\langle pk_s, sk_s\rangle\gets\Gen \< \< \\
	\< \< \bar{c}\gets\PEnc{pk_b}{U, \alpha, X_u, x_s, e_u, e_s, pk_s} \< \< \\
	\< \< \< \sendmessagerightx[1.2cm]{0}{\bar{c}} \< \text{// call \texttt{AuthPDID}($\bar{c}$)}\\ 
    \< \< \< \< \textbf{assert } \texttt{TxIncluded}(\bar{c}) \\
	\< \< \< \< \langle U, \alpha, X_u, x_s, e_u, e_s, pk_s\rangle\gets\PDec{sk_b}{\bar{c}} \\
	\< \< \< \< \textbf{assert } users[U]\neq\bot;\langle k_s,p_s,P_s,P_u,c\rangle\gets users[U] \\
	\< \< \< \< \beta \gets \alpha^{k_s};K\gets\Hash{(X_uP_u^{e_u})^{x_s+e_sp_s}} \\
	\< \< \< \< SK\gets\fk{K}{0}; \hat{c}\gets\PEnc{pk_s}{\beta,c,SK} \\
	\< \<\text{// Authentication's Phase II (Server)} \< \sendmessageleftx[1.2cm]{0}{\hat{c}} \< \\ 
	\< \< \langle\beta,c,SK\rangle\gets\PDec{sk_s}{\hat{c}} \< \< \\
	\text{// Authentication's Phase II (User)}\< \sendmessageleftx[1.2cm]{0}{\beta, X_s, c} \< \< \< \\
	k\gets\Hash{pwd, \beta^{1/r}};\langle p_u, P_u, P_s\rangle\gets\ADec{k}{c} \< \< \< \< \\
	e_u\gets\Hash{X_s,U};e_s\gets\Hash{X_u,S} \< \< \< \< \\
	K\gets\Hash{(X_sP_s^{e_s})^{x_u+e_up_u}};SK\gets\fk{K}{0} \< \< \< \<
        }}
}
\end{center}
    \caption{The \name authentication process, where $S$ is the server's
    identity.}
    \vspace{-0.6cm}
\label{fig:auth}
\end{figure*}

\begin{compactenum}
    \item The user, as in OPAQUE, computes his
        contributions $\alpha$ and $X_u$ to the authenticated key exchange
        protocol and sends the username $U$ together with $\alpha$
        and $X_u$ to the server.  

    \item The server first generates its contribution $X_s=g^{x_s}$ to the key
        exchange.  For similar reasons 
	as in the
        registration, this step is conducted by the server and not by the GPM.
        Next, the server computes the session-unique values $e_u$ and $e_s$,
        which will be used in HMQV.  Then the server generates
        a keypair $\langle pk_s, sk_s\rangle$ that binds the
	server-GPM communication to the current session (without revealing the server's identity). 
	Then, the server encrypts (with $pk_b$) its
        transaction containing the user's input ($U$, $\alpha$, and $X_u$), the
        server's key exchange contributions ($X_s$, $x_s$, $e_u$, and $e_s$), as
        well as its name $S$ and the ephemeral public key $pk_s$.  The encrypted
        transaction $\bar{c}$ is submitted to the blockchain platform.

    \item The transaction, after being appended to the ledger, triggers the
        GPM's \texttt{AuthPDID()} method, which is executed by a blockchain
        node within its secure enclave as follows.

\begin{inparaenum}[(a)]
        \item First, it calls the \texttt{TxIncluded($\bar{c}$)} method to
            ensure that the transaction $\bar{c}$ is already
            appended in the blockchain (this check is blockchain- and
            consensus-specific -- see \autoref{sec:pre:blockchain}).  The
            purpose of this check is to eliminate offline attacks conducted by a
            malicious blockchain node
            (see \autoref{prop:off-tee} in \autoref{sec:analysis:off} for more
            details).
        \item Next, the code decrypts the ciphertext $\bar{c}$, identifies the
            user, and restores his password metadata.
        \item Then, the OPAQUE protocol is continued, computing $\beta$
            from user-provided $\alpha$ and restored ${k_s}$. (The $\beta$
            values will allow the user to restore the key $k$ which was used for
            encrypting the metadata's ciphertext $c$.)
        \item OPAQUE's final phase is to derive a key that will be shared
            between the server and the user. To accomplish it, 
            we combine OPAQUE with the
            method it uses by default, i.e., the HMQV protocol.  
            Therefore, the GPM computes  $(X_uP_u^{e_u})^{x_s+e_sp_s}$ and
            hashes it into a key $K$.
    \item The \texttt{AuthPDID()} ends its execution by deriving (from $K$) the
        shared session key $SK$, which together with the values $\beta$ and $c$
            is encrypted under $pk_s$ as the ciphertext $\hat{c}$ and sent back
            to the server.
\end{inparaenum}

\item The server decrypts $\hat{c}$, saves $SK$, and passes $\beta, X_s, c$ to
    the user, to enable him to obtain the same key $SK$.  

    \item After receiving these values, the user  continues the protocol
	    by computing $\beta^{1/r}$ (which equals $(\HashPrim{pwd})^{k_s}$),
		which hashed with the user's password $pwd$ generates the key
		$k$ under which the metadata's ciphertext $c$ is encrypted (see
		\autoref{fig:reg}).  After decrypting $c$, the user finishes
		the protocol by computing $(X_sP_s^{e_s})^{x_u+e_up_u}$ and
		$SK$.
\end{compactenum}
The protocol finishes with the parties obtaining the same shared key $SK$,
which, in addition to authentication, can be used for protecting their
subsequent communication.  To authenticate, the user can simply use $SK$ to
(encrypt and) authenticate the exchanged messages together with the first
application-layer data.

\section{Security Analysis}
\label{sec:analysis}
\label{sec:analysis:priv}
\label{sec:analysis:off}
\label{sec:analysis:name}
\label{sec:analysis:auth}
\label{sec:analysis:online}

\subsection{Global and Collision-secure Names}
We require that the identity system provides global and collision secure names.
In this section, we show  that \names provide those properties, assuming a
secure blockchain platform.

\begin{theorem}
    \label{prop:global}
\names provide a collision-secure global namespace.
\end{theorem}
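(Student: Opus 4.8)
The plan is to establish the two properties---globality and collision-security---separately, reducing each to the stated security guarantees of the underlying blockchain platform (see \autoref{sec:pre:blockchain}) and the correctness of the \texttt{NewPDID()} registration logic (\autoref{fig:reg}).

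First I would argue globality. The key observation is that all identity registrations are serialized through a single shared data structure: the $users[]$ dictionary, which lives in the GPM's confidential state and is modified only by trusted enclave code executing over an append-only ledger. Because the ledger enforces a total order on appended transactions and the consensus protocol guarantees a consistent view of that ledger across all honest nodes (our standing assumption), any two parties resolving a username $U$ at any time and place are reading from the same canonical sequence of state transitions. Hence the mapping $U \mapsto users[U]$ is well-defined and identical everywhere, which is precisely what globality demands. I would state this as: if a \name for $U$ has been registered and appended, then every subsequent resolution of $U$ returns the same password metadata $m$, regardless of which node or when the resolution occurs.

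Next I would argue collision-security. The heart of the argument is the line $\textbf{assert } users[U] == \bot$ in \texttt{NewPDID()}: the registration succeeds only if $U$ is currently unbound. I would reason as follows. Suppose, for contradiction, that an adversary successfully registers (or overwrites) an identity $U$ already owned by an honest user. Since the enclave code is assumed uncompromisable and faithfully executes the assertion, a second \texttt{NewPDID($\tilde c$)} call carrying the same $U$ must abort once the first registration is recorded in the confidential state. The only remaining avenues for a collision are therefore (i) forging or reordering the ledger so that the honest user's registration transaction is dropped or displaced, or (ii) tampering with the enclave's view of $users[]$. Avenue (i) is ruled out by the ledger's append-only property and the consensus safety assumption (adversary controls at most the tolerable fraction, e.g.\ $1/3$ in Byzantine consensus); avenue (ii) is ruled out by the TEE integrity assumption. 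Crucially, because there is no privileged authority that can mint alternative bindings---the binding is created solely by the first-to-register rule enforced in code---no party, including a compromised blockchain node, can equivocate on $U$.

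\emph{The main obstacle} I anticipate is making the reduction to consensus safety fully rigorous in the presence of a partially adversarial node set: I must ensure that a malicious node, which does legitimately execute enclave code and see decrypted state, cannot exploit its position to either suppress an honest registration or race a conflicting one ahead of it. This is exactly where the \texttt{TxIncluded()}-style inclusion proofs and the $3f+1$ threshold do the work---I would lean on the assumption that honest nodes hold a consistent finalized prefix, so that ``first registration'' is an unambiguous, globally agreed event rather than a node-local one. A secondary subtlety worth a sentence is that confidentiality of the state does not weaken collision-security: even though a node operator can read $users[]$ inside its enclave, reading does not grant the ability to write a conflicting binding without passing through the same asserted code path on the agreed ledger.
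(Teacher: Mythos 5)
Your proposal is correct and takes essentially the same route as the paper's proof: globality follows from the consensus-guaranteed canonical view of the GPM's $users[]$ state, and collision-security from the fact that conflicting registrations are totally ordered on the ledger, so the later one inevitably fails the $\textbf{assert}\ users[U] == \bot$ check in \texttt{NewPDID()}. One small correction: the \texttt{TxIncluded()} inclusion check you lean on belongs to \texttt{AuthPDID()} (the paper uses it to block offline attacks by malicious node operators, see \autoref{prop:off-tee}), not to registration; the paper's collision argument needs only the agreed execution order of appended transactions.
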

\begin{proof}[sketch]
    Since we assumed that the blockchain platform deployed is secure, the GPM's
	state, at any point in time, has one canonical view which is consistent
	with all previous views.  This guarantees, that the namespace
	consisting of identifiers recorded in the GPM's $users[]$ dictionary
	represents a global view of all usernames.
    The blockchain-platform assumption also implies that the transactions are
    processed by nodes correctly, i.e., the GPM's state can only be changed by
    secure enclaves processing transactions whose order is agreed on with the
    underlying consensus algorithm.  Given that, it is easy to show that the
    namespace is collision-secure since if there are two conflicting
    transactions, trying to register the same name $U$, the latter (according to
    their execution order in the ledger) will inevitably fail, as the enclave code
    processing it will not continue the \texttt{NewPDID()} method (see
    \autoref{fig:reg}), after executing the following assertion:
    $
    \textbf{assert } users[U] == \bot.
    $
\end{proof}

\subsection{Authentication}
Another stated requirement is the security of the authentication process.
This section argues that our framework provides secure authentication for
\names.

\begin{prop}
No adversary can authenticate on behalf of the user $U$ without knowing the
corresponding password $pwd$.
\end{prop}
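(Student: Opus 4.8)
The plan is to reduce the security of \name authentication to the security of OPAQUE as an augmented PAKE, together with the faithful-execution guarantee the GPM inherits from the trusted blockchain platform. The key structural observation is that, when the server and the GPM are abstracted as a single logical entity, the protocol of \autoref{fig:auth} is exactly the server side of OPAQUE's authentication: an OPRF evaluation followed by an HMQV key exchange. Since the adversary may corrupt the server but, by our adversary model, cannot compromise the platform or the GPM's confidential state $\langle k_s,p_s,P_s,P_u,c\rangle$, a corrupted \name server in fact learns strictly less than the OPAQUE server does (it never holds the password metadata at all, only the relayed ciphertexts $\bar{c}$ and $\hat{c}$). Hence any adversary impersonating $U$ in \name yields an adversary impersonating $U$ in OPAQUE, and the claim follows from OPAQUE's aPAKE security.

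To make the argument self-contained at the cryptographic level, I would then trace exactly what an impersonator must compute. To be accepted, the adversary (playing the user) must derive the session key $SK\gets\fk{K}{0}$ that the GPM computes as $K\gets\Hash{(X_uP_u^{e_u})^{x_s+e_sp_s}}$. Matching this value requires the HMQV secret $p_u$ and the element $P_s$, both of which are sealed inside $c\gets\AEnc{k}{p_u,P_u,P_s}$. Recovering them therefore requires the key $k\gets\Hash{pwd,(\HashPrim{pwd})^{k_s}}$: by the authenticity of the authenticated-encryption scheme, $\ADec{k'}{c}$ succeeds only for the correct $k'=k$, and by the pseudorandomness of the OPRF $\Fk{k_s}{\cdot}$, this $k$ is unpredictable to anyone who has not evaluated the OPRF on the genuine password. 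Thus producing $SK$ without $pwd$ would contradict either the authenticated-encryption or the OPRF assumption.

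Finally, I would bound what an adversary can achieve by guessing. Each candidate $pwd'$ forces one online interaction: the adversary must submit $\alpha\gets(\HashPrim{pwd'})^r$ and obtain $\beta\gets\alpha^{k_s}$ from the GPM in order to form $k'\gets\Hash{pwd',\beta^{1/r}}$ and test it against $c$. Because $k_s$ never leaves the GPM's confidential state, no offline precomputation of $\beta$ is possible; the adversary is confined to online trials, succeeding with probability at most $q/|\mathcal{D}|$ after $q$ queries against a dictionary $\mathcal{D}$, which is the optimal bound inherited from OPAQUE.

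The main obstacle is showing that splitting the server into an (untrusted) server and a (trusted) GPM opens no new attack surface. Concretely, I must verify that a corrupted server cannot profitably choose its contributions $x_s, X_s, e_u, e_s$ or its ephemeral key $pk_s$, nor replay or splice transcripts across sessions, so as to convert interaction with the enclave into offline password testing. The session-binding values $e_u\gets\Hash{X_s,U}$ and $e_s\gets\Hash{X_u,S}$ and the per-session key $pk_s$ tie each run to its inputs, while the crucial defense against a malicious node running the enclave offline is the \texttt{TxIncluded}($\bar{c}$) check; establishing that this check indeed rules out such offline attacks is exactly \autoref{prop:off-tee}, on which I would rely.
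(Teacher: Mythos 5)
Your treatment of the session-key half of the claim is sound and follows essentially the paper's route: reduce the combined server-plus-GPM execution to the server side of OPAQUE, defer the server/GPM split and the offline-node concern to \autoref{prop:passive_off} and \autoref{prop:off-tee}, and conclude from OPAQUE's aPAKE security; your version is in fact more detailed than the paper's sketch. However, you are missing the other half of the argument. The paper's proof begins with a case split: an adversary can authenticate as $U$ either (i) by hijacking the identifier itself, i.e., registering and then authenticating with a fresh pair $\langle U, pwd'\rangle$, or (ii) by computing a correct session key for the \emph{existing} record. Your proof addresses only (ii). Everything you argue --- that matching $SK$ requires $p_u$ and $P_s$ sealed under $c$, hence the OPRF output on the genuine password --- silently presupposes that $users[U]$ still holds the legitimate user's metadata when \texttt{AuthPDID()} runs. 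But re-registering $U$ through the public \texttt{NewPDID()} interface is not a ``compromise of the GPM's confidential state''; it is an ordinary operation available to anyone. If it could succeed, the GPM would compute $K$ from the adversary's own $P_u, p_s$, and the adversary would trivially authenticate as $U$ while knowing only its self-chosen $pwd'$.

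Closing this gap requires invoking the collision-security of the namespace (the paper's \autoref{prop:global}): the assertion $users[U] == \bot$ in \texttt{NewPDID()}, combined with the assumed security of consensus ordering and enclave execution, guarantees that any later registration attempt for an existing $U$ fails, ruling out case (i). The tell that this ingredient is necessary rather than cosmetic: your proof would go through verbatim for a variant of the GPM lacking that assertion, yet for that variant the lemma is plainly false. Add the case split and the appeal to \autoref{prop:global}, and your argument becomes a complete (indeed stronger-than-sketch) proof of the statement.
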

\begin{proof}[sketch]
    An adversary, without knowing the user's password, can authenticate as the user
    only if one of the following occurs
    \begin{inparaenum}[1)]
        \item the adversary can impersonate the user's identity $U$, registering
            and authenticating with a new $\langle U, pwd'\rangle$ pair,
        \item the adversary can compute a correct session key.
    \end{inparaenum}
    As we show in \autoref{prop:global}, the identities are global and unique,
    thus the adversary cannot impersonate the user registering another
    username-password pair, contradicting the first option.  Therefore, the only
    way to authenticate on behalf of the user is to obtain a correct session key
    shared with the server.  This option; however, can be eliminated by showing
    that with our assumptions, the authentication phase can be reduced to the
    OPAQUE protocol  and benefit from its properties (as we show in
    \autoref{prop:passive_off}).
\end{proof}

\subsection{Offline Attacks}
Besides preventing \name's impersonation, our framework aims at limiting offline
dictionary attacks,  where an adversary can gather information allowing her to
check whether the used password is in a dictionary.
In this section, we show that \names prevent such attacks.

\begin{prop}
    \label{prop:passive_off}
No passive adversary can learn the user's password $pwd$ or launch a successful
    offline attack against $pwd$.
\end{prop}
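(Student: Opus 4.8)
The plan is to prove the statement by a reduction to the security of OPAQUE, which by assumption (see \autoref{sec:pre:pass}) resists offline dictionary attacks, combined with the semantic security of the public-key encryption scheme $\PEnc{}{}$. The central observation is that a passive adversary who observes a complete \name run sees only two kinds of data: (a) the cleartext wire messages $U,\alpha,X_u$ (user to server) and $\beta,X_s,c$ (server to user), and (b) the encrypted transactions $\tilde{c}=\PEnc{pk_b}{U,m}$, $\bar{c}=\PEnc{pk_b}{\cdots}$, and $\hat{c}=\PEnc{pk_s}{\beta,c,SK}$, together with the confidential GPM state holding the metadata $m$. I would argue that the objects in (b) are useless to such an adversary, so that her view collapses to (a), which is precisely the transcript of a two-party OPAQUE execution.

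First I would neutralize the encrypted components. Since a passive adversary holds neither $sk_b$ nor the ephemeral $sk_s$ (the latter is freshly generated by the server per session via \Gen and never transmitted in the clear), and since we assume $\PEnc{}{}$ is semantically secure and that the blockchain platform keeps the GPM's state confidential (\autoref{sec:pre:blockchain}), the ciphertexts $\tilde{c},\bar{c},\hat{c}$ and the stored metadata reveal nothing about their plaintexts. Formally, I would give a hybrid argument over these encrypted objects: any adversary whose advantage depends on them yields a distinguisher breaking the IND-CPA security of $\PEnc{}{}$, which is excluded by assumption. Hence the adversary's success probability is unchanged if we replace $\tilde{c},\bar{c},\hat{c}$ and the stored $m$ by encryptions of arbitrary fixed messages of the same length.

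With the encrypted traffic removed, I would show that the residual transcript $\langle U,\alpha,X_u,\beta,X_s,c\rangle$ coincides with the messages of a standard OPAQUE run in which the server and the GPM are abstracted as a single OPAQUE server. Indeed, $\alpha=(\HashPrim{pwd})^r$, $\beta=\alpha^{k_s}$, the key-exchange contributions $X_u,X_s$, and the authenticated-encryption blob $c$ are computed exactly as in OPAQUE. Therefore any passive adversary against \name induces, through this abstraction, an eavesdropping adversary against OPAQUE with at least the same probability of recovering $pwd$ or mounting an offline attack. By OPAQUE's proven resistance to such attacks -- the fresh blinding factor $r$ hides $pwd$ inside $\alpha$ under the OPRF/one-more-Diffie--Hellman assumption, and the key $k=\Hash{pwd,\beta^{1/r}}$ that decrypts $c$ depends on the server-held secret $k_s$, so no candidate password can be tested without an online query to the party holding $k_s$ -- no such OPAQUE adversary succeeds. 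This contradiction establishes the claim.

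The main obstacle I anticipate is the third step: rigorously justifying that the decentralization-specific artifacts ($\tilde{c}$, $\bar{c}$, $\hat{c}$, and the persistently stored encrypted metadata) genuinely add nothing to the adversary's view. Unlike plain OPAQUE, \name exposes these objects on a public ledger, so the reduction must argue carefully -- via the hybrid over the encrypted objects -- that they are simulatable without knowledge of $pwd$, making the collapse to the OPAQUE transcript sound. Additional care is needed to confirm that $sk_s$ never leaks and that the per-session freshness of $r$, $x_u$, and $x_s$ rules out any cross-session precomputation that could seed an offline attack.
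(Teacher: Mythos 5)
Your proposal is correct and follows essentially the same route as the paper: both arguments observe that the server and GPM jointly execute the server side of OPAQUE, so a passive adversary's view differs from a plain two-party OPAQUE (with HMQV) transcript only by the encrypted messages $\tilde{c}$, $\bar{c}$, $\hat{c}$ and the confidential GPM state, which reveal nothing since the adversary holds neither $sk_b$ nor the ephemeral $sk_s$ and cannot break the encryption or the blockchain platform's confidentiality. Your hybrid/IND-CPA argument and the explicit simulatability discussion simply formalize what the paper's sketch asserts in one step, so the two proofs coincide in substance.
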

\begin{proof}[sketch]
    To prove this lemma,  we show that our construction
    can be reduced to the combination of the OPAQUE and HMQV protocols, for
    which security proofs are presented in their respective
    papers~\cite{jarecki2018opaque,krawczyk2005hmqv}.
    In our construction (see \autoref{fig:auth}), the server and the GPM
    together execute  OPAQUE's server-side authentication logic.
    Since the GPM's execution is confidential, the adversary learns only two
    additional messages (when compared with the original OPAQUE execution),
    exchanged between the server and the GPM:
    $\bar{c}\gets\PEnc{pk_b}{U, \alpha, X_u, x_s, e_u, e_s, pk_s}$
    and 
    $\hat{c}\gets\PEnc{pk_s}{\beta,c,SK}.$
    As the adversary cannot compromise the underlying cryptographic primitives,
    the corresponding secret keys $sk_b$ and $sk_s$, 
    and the
    blockchain platform, she is unable to learn anything more than from the
    original OPAQUE's execution.
\end{proof}

A malicious server is a more interesting case.  Such a server does
not have an interest in compromising the session key (since it knows it already),
but may be interested in learning passwords of users authenticating with it.
Below we argue that such a server cannot learn any information allowing it to
launch even offline password attacks.

\begin{prop}
No adversary able to compromise a server can learn the user's password $pwd$ or
	launch a successful offline attack against $pwd$.
\end{prop}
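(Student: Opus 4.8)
The plan is to reduce this lemma, as in \autoref{prop:passive_off}, to the offline-attack resistance of OPAQUE, by showing that a compromised server obtains strictly less password-related information than an OPAQUE adversary who has \emph{not} compromised the password file, and that every password test it can mount is necessarily online. First I would fix the server's complete view of an authentication run (\autoref{fig:auth}): from the user it learns $U,\alpha,X_u$; it generates $x_s,X_s$ and the ephemeral pair $\langle pk_s,sk_s\rangle$ itself and derives $e_u,e_s$; and from the GPM it receives, after decrypting $\hat{c}$ with $sk_s$, the triple $\langle\beta,c,SK\rangle$. Note that $pwd$ never appears in plaintext in this view, since the user transmits only the blinded $\alpha=(\HashPrim{pwd})^r$. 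The decisive observation is that the password metadata $\langle k_s,p_s,P_s,P_u,c\rangle$ -- in particular the OPRF key $k_s$ -- resides only in the GPM's confidential state and is never sent to the server, so the only password-dependent quantities available to the server are $\beta=\alpha^{k_s}$ and the AEAD ciphertext $c$.

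Next I would argue that neither quantity enables \emph{local} password testing. To verify a candidate $pwd'$ offline, the natural oracle is to recompute $k'=\Fk{k_s}{pwd'}=\Hash{pwd',(\HashPrim{pwd'})^{k_s}}$, decrypt $c$, and apply the consistency check $g^{p_u}\overset{?}{=}P_u$; this requires evaluating the OPRF at $pwd'$, i.e.\ computing $(\HashPrim{pwd'})^{k_s}$. The server, however, holds only the single pair $\langle\alpha,\alpha^{k_s}\rangle$: recovering $k_s$ from it is a discrete-log problem, and computing $\gamma^{k_s}$ for an independent base $\gamma=\HashPrim{pwd'}$ from one such pair is infeasible under the (DDH-type) assumption underlying OPAQUE's OPRF. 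The value $\beta$ is merely the OPRF on the actual, blinded input, and since the server knows neither the blinding factor $r$ nor $k_s$ it gives no leverage on other candidates. Invoking OPAQUE's guarantee directly: without the file, the OPRF output is indistinguishable from random, so $c$ is indistinguishable from an encryption under an independent key and leaks nothing offline-testable about $pwd$.

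Finally I would observe that the only way for the server to obtain an OPRF evaluation at a \emph{chosen} point -- e.g.\ by submitting $\alpha=\HashPrim{pwd'}$ and reading back $\beta=(\HashPrim{pwd'})^{k_s}$ -- is to send a fresh transaction $\bar{c}$ to the GPM, which is an online interaction recorded on the ledger (and, against a malicious node co-located with the server, further constrained by the \texttt{TxIncluded} check of \autoref{prop:off-tee}). Each such query tests at most one password, so every guess the compromised server can make is an \emph{online} guess, charged against the budget analyzed in \autoref{sec:analysis:online}; no offline attack is possible, and by the same token $pwd$ cannot be read off directly. The server is thus in exactly the position of an OPAQUE adversary holding the client-facing transcript but not the password file, and the claim follows.

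The step requiring the most care is the second one: showing that the pair $\langle\beta,c\rangle$ -- one genuine OPRF output on the real password together with the ciphertext it keys -- leaks nothing offline-exploitable. This is precisely where OPAQUE's OPRF pseudorandomness must be used as a black box, since an incautious design could let the server unblind $\beta$ or correlate it with $c$; everything else reduces to the secrecy of $k_s$ inside the GPM and to \autoref{prop:passive_off}.
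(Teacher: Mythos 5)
Your proposal is correct and takes essentially the same approach as the paper's own proof: it enumerates the compromised server's view ($U,\alpha,X_u$ in, $\beta,c,SK$ out), reduces any offline password test to the ability to evaluate the OPRF at chosen points without $k_s$ (which never leaves the GPM's confidential state), and rules out computing $\gamma^{k_s}$ from the single observed pair $\langle\alpha,\alpha^{k_s}\rangle$ via the hardness assumption underlying OPAQUE's OPRF, with any further GPM queries being inherently online. The only quibble is terminological: the paper invokes the \emph{One-More Diffie-Hellman} assumption here, which is a computational, one-more-type assumption rather than a ``DDH-type'' one as you label it, but your argument uses it in exactly the same way.
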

\begin{proof}[sketch]
    During the execution of the protocol (see \autoref{fig:auth}), an adversary
	controlling the server learns the user's input $U,\alpha, X_u$ and can
	freely control values $U, \alpha, X_u, x_s, e_u, e_s, pk_s$ passed to
	the GPM.  The adversary learns GPM's output $\beta, c, SK$, and the
	goal of the adversary is to find the password $pwd$ or any information
	allowing for offline attacks against $pwd$. 
According to the registration and authentication protocols, the knowledge of
$k_s$ is necessary to run offline attacks against $pwd$. With $k_s$ the
adversary could try to keep generating different symmetric keys $k'$ from
potential passwords $pwd'$:
$
k'\gets\Fk{k_s}{pwd'},
$
and keep testing them against the known ciphertext $c$, which is computed as
$
\AEnc{k}{p_u,P_u,P_s}.
$
However, $k_s$ is a high-entropy secret which is not revealed to the adversary. The adversary
can interact with the GPM, passing different $\alpha'$ values and obtaining
$
\beta'\gets\alpha'^{k_s},
$ 
but if this interaction would allow the adversary to learn $k_s$, or even
to compute $\alpha'^{k_s}$ for any non-queried $\alpha'$, that would be
equivalent with breaking the One-More Diffie-Hellman problem (assumed
to be hard by the OPAQUE protocol), contradicting our assumptions.
\end{proof}

Similarly, it is easy to show that such an adversary cannot learn any value
of the password metadata ($p_u,P_u,P_s$), and subsequently, cannot learn a
session key $SK$ for any authentication that she does not participate in.

Although we assume that the blockchain platform is secure, a tolerable number
of individual nodes (usually, up to $1/3$ of all nodes) can be compromised.
Then, a particularly interesting case is when a malicious node operator interacts
with the secure enclave it runs. Such an operator, cannot read the enclave's
memory or influence its execution steps but can interact with it offline.
In particular, the node can emulate the authentication process (see
\autoref{fig:auth}) by trying multiple passwords, calling the GPM's
\texttt{AuthPDID()} method locally, and checking if the user's
session key and the key outputted by the enclave match.
We show that the \name framework eliminates such attacks.

\begin{prop}
    \label{prop:off-tee}
	An adversary able to compromise a tolerable number of blockchain nodes cannot launch a
	successful offline dictionary attack against the user's password $pwd$.
\end{prop}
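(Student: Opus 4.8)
The plan is to show that the enclave, once corrupted into a purely local oracle, would hand the adversary exactly one OPRF evaluation per query, and then to argue that the \texttt{TxIncluded} check inside \texttt{AuthPDID()} caps the number of such queries at the number of genuinely on-chain transactions, thereby degrading any would-be offline attack into an online one. First I would make the attack explicit. A node operator can neither read the enclave's memory nor alter its execution, but it can invoke \texttt{AuthPDID()} on ciphertexts of its own making, and since it itself chose the ephemeral key $pk_s$, it can decrypt the enclave's reply $\hat{c}$ and recover $\beta$ and $c$. By setting $r=1$ and $\alpha\gets\HashPrim{pwd'}$ in a crafted transaction, the adversary obtains $\beta=\HashPrim{pwd'}^{k_s}$, reconstructs $k'\gets\Fk{k_s}{pwd'}=\Hash{pwd',\beta}$, and tests it via $\ADec{k'}{c}$; a successful authenticated decryption reveals $pwd'=pwd$. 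Left unchecked, the enclave is therefore precisely the OPRF oracle that an offline dictionary attack against OPAQUE's metadata requires (see \autoref{fig:auth}).

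The key step is to bound how many passwords this buys the adversary. Each distinct guess $pwd'$ corresponds to a distinct query point $\alpha=\HashPrim{pwd'}$, hence to a distinct transaction $\bar{c}$, because $\bar{c}$ encodes a single group element $\alpha$. By the One-More Diffie-Hellman assumption inherited from OPAQUE (the same assumption invoked in \autoref{prop:passive_off}), $n$ enclave evaluations yield $\alpha^{k_s}$ at no more than the $n$ queried points and leak nothing about $k_s$, nor about $\alpha^{k_s}$ at any un-queried $\alpha$. Thus the count of testable passwords equals exactly the count of distinct transactions the enclave will process.

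It then remains to bound that count. The assertion \texttt{TxIncluded}($\bar{c}$) runs inside the trusted enclave and demands a valid blockchain-inclusion proof for $\bar{c}$ -- a multisignature of $f+1$ nodes, or a Merkle path into the agreed chain, per \autoref{sec:pre:blockchain}. Since the adversary controls only a tolerable number of nodes and, by assumption, can neither subvert the enclave's execution nor break consensus, it cannot forge such a proof, so every $\bar{c}$ it feeds the enclave must have been genuinely ordered on the ledger. Consequently each password guess forces the adversary to submit, and have included, a fresh on-chain transaction: the attack collapses into an online attack, bounded by ledger throughput and governed by the online-attack countermeasures of \autoref{sec:analysis:online}.

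I expect the main obstacle to be precisely this bounding step, namely ruling out amortization: one must argue that the adversary cannot extract more than a single useful evaluation per included transaction, cannot replay one included $\bar{c}$ to learn anything new, and cannot combine evaluations to test un-queried passwords. All three reduce to the One-More Diffie-Hellman hardness together with the single-$\alpha$ structure of $\bar{c}$, backed by the unforgeability of inclusion proofs that enclave integrity and consensus security guarantee; assembling these into a clean reduction, rather than the explicit-attack or inclusion-check parts, is where the real work lies.
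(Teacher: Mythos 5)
Your proposal is correct and rests on the same pivot as the paper's own proof: the \texttt{TxIncluded} assertion inside \texttt{AuthPDID()}, combined with the unforgeability of inclusion proofs against an adversary controlling only a tolerable number of nodes, forces every enclave query to correspond to a transaction already ordered on the ledger, degrading any offline dictionary attack into an online one. Your two additions --- the explicit $r=1$ attack exhibiting the enclave as an OPRF oracle absent the check, and the One-More Diffie-Hellman argument that each included transaction yields at most one testable password (ruling out amortization or replay) --- go beyond the paper's sketch, which leaves both points implicit, so your version is a strictly more rigorous rendering of the same route.
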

\begin{proof}[sketch]
    To prove this property, we show from the GPM's construction, that its
    \texttt{AuthPDID()} method computes a shared secret key only for a
    transaction that was already added to the blockchain.
    In our construction, we use a technique similar to the one presented by
    Kaptchuk et al.~\cite{kaptchuk2019giving}, where upon receiving a
    transaction $\bar{c}$, its processing method \texttt{AuthPDID()} first calls
    $
    \textbf{assert } \texttt{TxIncluded}(\bar{c}),
    $
    which guarantees that the transaction is already part of the ledger.
    Since the adversary, even compromising a tolerable number of nodes, is not
    able to compromise the properties of the blockchain platform, she is not
    able to overcome this assertion with any transaction that is not 
    appended to the blockchain.
    With this check, the adversary controlling a compromised node and
    interacting with its trusted enclave offline, cannot emulate the user-server
    authentication, and to get any GPM's output she needs to register the
    transactions on the blockchain, making her attack attempts visible to the
    network (i.e., online).
\end{proof}

\subsection{Online Attacks}
Online password guessing attacks, where an adversary interacts with the
authentication system trying to guess correct username-password pairs, is a
generic attack against any password system.  The \name framework is not an
exception to such attacks, and an adversary can just try different passwords
interacting with supporting servers or with the GPM directly.  A popular way of
mitigating such attacks is rate limiting.  Usually, it introduces a trade-off
between the security and availability, and such  mitigation would be implementable
at the network-level or within the GPM (e.g., via small state representing the
number of recent authentication attempts).  Moreover, blockchain platforms
enable an interesting extension of this technique. Instead of limiting authentication
attempts, after a threshold number of attempts in a time window, the platform
could require a small payment that could disincentivize adversaries
from guessing passwords. We leave details
of such a solution as future work.

\subsection{Privacy}
The \name framework does not introduce any message flows or mechanisms
violating the user's privacy when compared with the traditional password-based
authentication (users interact only with servers, except for the registration).
The messages exchanged between servers and the GPM are encrypted, thus do not
reveal anything about the authenticating user.  Moreover, the server's keys used in
the GPM-server communication are ephemeral and unknown to an observer, therefore,
the observer investigating the blockchain logs would not be able to determine
the server's identity (we do not consider network-level deanonymization
attacks).  Finally, all usernames are stored as part of the encrypted state and
processed only by trusted enclaves confidentially. 

\section{Implementation and Evaluation}
\label{sec:implementation}

\subsection{Implementation}
\begin{table}[t]
\centering
    \caption{Performance of the operations (ms).}
\label{tab:perf}
  \includegraphics[width=\linewidth]{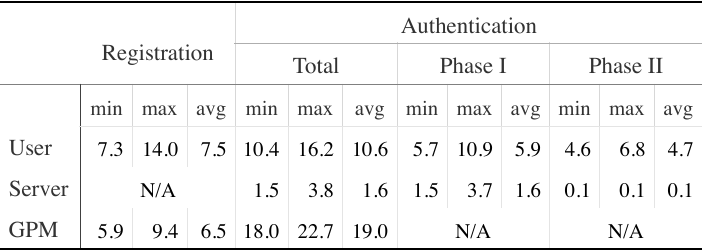}
\end{table}
We fully implemented the \name framework and our
implementation consists of a supported user's client (executing the
user's
logic as in \autoref{fig:reg} and \autoref{fig:auth}), a server (authenticating
users as in \autoref{fig:auth}), and the GPM handling registrations and assisting
servers in user authentication.  The user and server functionalities are
implemented in C. To realize the GPM, we used a recent Hyperledger Fabric Private
Chaincode (FPC) framework~\cite{brandenburger2018blockchain} which extends
Hyperledger Fabric by confidential smart contracts.  The GPM is implemented in
C++ as a smart contract of this platform by using Intel SGX SDK to run the contract
within an enclave.
For encryption and hashing, 
use the NaCl library~\cite{bernstein2009cryptography} with its defaults for the user and server
implementations, and the TweetNaCl library for the GPM.
We implemented the modified OPAQUE and HMQV protocols with elliptic
curve cryptography for the group operations.  Our implementation bases upon and
extends the Easy-ECC library 
and we used the secp256r1 curve by default.
Our code is publicly available at \url{https://github.com/pszal/pdid}.

\subsection{Evaluation}
To evaluate our \name implementation we conducted a series of experiments.
First, we evaluated computational overheads introduced by the \name framework.
We set up an FPC testbed and executed full \name registration and authentication
operations  1\,000 times each.  In every run, we measured the time required to
complete different protocol steps.  To measure execution times, we used a
commodity laptop equipped with SGX-enabled Intel i7-7600U (2.80GHz) CPU, 8GB of RAM, and run
under Linux.  In our experiments, we used a conservative setting
measuring the computational overhead of specific procedures executed sequentially on
`fresh' registration and authentication requests. We did not use any caching
strategies, parallelization, sophisticated parametrization, or
request/transaction batching, which would amortize the execution time, although
we see these techniques as desired in a deployment-ready implementation.  The
results of our experiments are presented in \autoref{tab:perf}, reporting
total execution times and times for different authentication
phases (see \autoref{fig:auth}). 

As presented, even in our unoptimized setting 
\names introduce a small computational overhead.  The
registration on the user side requires around 7ms on average, while the
authentication process takes in total around 10ms on average, 
and requires  the user to keep only a 97 byte long state.  More
importantly, the server's side is even faster, requiring
only 1.63ms per authentication (in total, on average), dominated by its first phase. It
allows a server to conduct around 613 authentications per second. A server
needs to store only 64 byte long state (an ephemeral secret key $sk_s$) per authentication.  Similarly, our protocol introduces a small
transmission overhead with  message sizes are between 74 and 300 bytes.

To improve performance and allow
higher flexibility, Hyperledger Fabric separates contract execution and
consensus layers, with
distinct node functions responsible for ordering and executing transactions.  Given
that, the performance of the \name framework is bounded by the consensus
layer (since only once transactions are ordered they can be executed).
Fortunately, the
performance of this layer has been extensively investigated in previous
studies, and even in large-scale distributed deployments, a Hyperledger Fabric
network yields throughput between 2\,000 and 3\,000 transactions per
second~\cite{androulaki2018hyperledger}, introducing the end-to-end latency
between 500 and 800 ms, respectively. 

The GPM's \texttt{NewPDID()} code, executed within an SGX
enclave, handles a  \name registration in around 6.54ms on average.  Therefore,
a single core executing the GPM's registration can handle around 153
registrations per second (around 0.55 million per hour).  This throughput seems
to be sufficient to handle even a global-scale registration load using only a
single core.  Each registered
\name requires only 260 byte password metadata in the GPM's contract
state.  Due to the elliptic curve operations, the authentication
on the GPM is relatively slower, requiring 19.00ms on average, which
yields the throughput of around 52 authentications per second on a single
core.  To shed a light
on this number, we refer to Thomas et al.~\cite{thomas2019protecting} who report
that for 670\,000 users, Google has experienced 21 million authentications
for 28 days in early 2019 with a peak 2\,192 authentications per 100 seconds (i.e.,
around 22 authentications per second).
Approximating their results, a single
core executing GPM's \texttt{AuthPDID()} can handle around 1.58 million
users in the peak and around 4 million for the
averaged load.  We note, that this throughput can be scaled horizontally
(to the limits of the consensus layer). 

\section{Related work}
\label{sec:related}

\subsection{Password-authenticated Identities}
Username and password pairs, as described in \autoref{sec:pre:pass}, are
arguably the most popular credentials for user
authentication~\cite{bonneau2015passwords}.  
Typically, the identity is expressed as a user-selected
username (i.e., login) which is local to the server and cannot be used for
authenticating to other servers.
To make such `local' logins more universal and useful, decentralized
authentication was
proposed.  For instance, with
OpenID~\cite{recordon2006openid} users can use their identities
registered with a single server (i.e., identity provider) to authenticate with other servers with no need of creating a
new dedicated identity for them.  
%
Password-authenticated identities provide critical advantages, contributing to
their surprising domination over seemingly superior
alternatives~\cite{bonneau2012quest}.  Firstly, they are
expressed as human-meaningful names, limiting the need of additional devices
or infrastructures for storing or processing them.  Secondly, passwords, broadly
considered as memorable secrets, significantly simplify the secret managements
(especially, on the user's side).  Lastly, they have a particularly long history
as authentication means, thus they typically do not introduce any
adoption or operation overheads.  They come
with some limitations; however.  The main drawback 
is that users do not control their own identities and a malicious server
could impersonate or terminate any identities at its will.  In fact, not
only servers have to be trusted. For example, identities  expressed with e-mail
addresses rely on DNS. 
OpenID, although convenient, enables identity providers to learn what servers (and
when) users contact.
Lastly, most of these schemes use the standard insecure authentication
(discussed in \autoref{sec:pre:pass}). 

\begin{table*}[t!]
  \centering
  \caption{Comparison of different schemes in terms of the desired properties.}
  \includegraphics[width=1.0\linewidth]{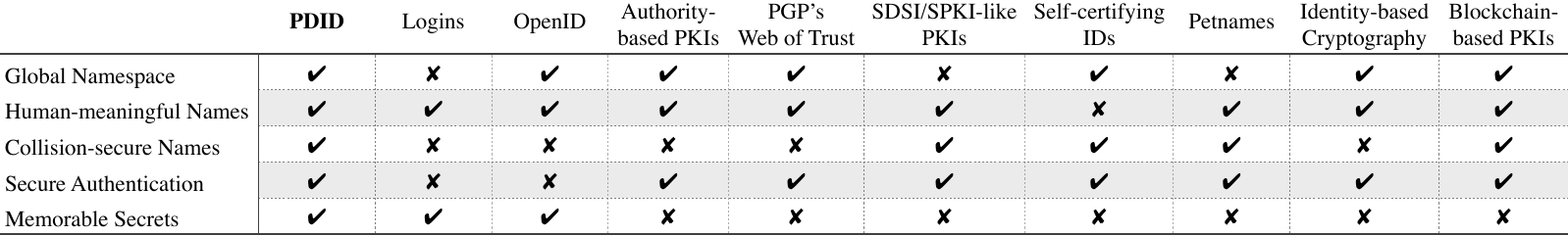}
  \label{fig:comparison}
    \vspace{-0.75cm}
\end{table*}
\subsection{Public-Key-authenticated Identities}
Systems described in this part require users to establish and manage
public-private keypairs. Although it may improve security
and enable new applications, 
in practice, the public-private keypair management  has
been proved to be a challenging task, not only for users but even for allegedly
more tech-savvy server operators~\cite{krombholz2017have}. 

\subsubsection{Certificate-based}
Authority-based public-key infrastructures (PKIs), like X.509
PKI~\cite{cooper2008internet}, are designed to manage mappings between
identities and their public keys.  Typically, they introduce trusted certification
authorities (CAs) that verify bindings between identities
and their claimed public keys and assert this fact in signed
certificates.  X.509 is prominently used together with TLS for authenticating
web services (identified by domain names); however, it is also adopted for user
identities (expressed by full names
 and/or e-mail addresses)~\cite{ramsdell2004secure}. 
These PKIs
require trust in CAs, which usually trust DNS for identity verification, and  
there have been multiple
real-world attacks on CAs reported to date~\cite{leavitt2011internet}.  Although
many recent approaches try to improve these PKIs, they usually target CAs'
accountability, transparency, and attack
detection~\cite{laurie2014certificate,basin2014arpki,ryan2014enhanced,melara2015coniks,syta2016keeping},
but without changing the fundamental CA trust assumptions.

Relaxing the assumption of trusted authorities is a design goal of
decentralized PKIs, where no trusted party is needed to verify identities,
which in turn, are verified and asserted by other system participants.  In
those systems, trust decisions are made solely by users, depending on their
trust estimation of the quality and length of `trust chains'. 
Such a web-of-trust model was prominently
proposed in PGP~\cite{abdul1997pgp} for securing e-mails (which still rely on
the DNS hierarchy).  SDSI/SPKI~\cite{ellison1999simple} is a
distributed PKI with local namespaces extending the web-of-trust paradigm by
introducing groups, access control, and security policies.  These systems
eliminate trusted parties, but their namespaces either allow collisions or are
local. Moreover, they
require an infrastructure for distributing trust relations. 

Petnames~\cite{stiegler2005introduction} is an anti-phishing system where users
themselves can assign local (private) names to keys of parties they interact
with, distrusting names placed in certificates.  The intention of those local
names is that, if a certificate for a lexicographically-similar phishing website
is presented, it will not be trusted by the user by default, since the website
will not have its petname.  The system requires users to keep maintaining
correct bindings between petnames and keys, which in a dynamic or multi-key
environment, like the Internet, can be troublesome and harm 
usability~\cite{ferdous2009security}.

\subsubsection{Certificate-less}
Since certificates introduce substantial overheads and their management poses
significant challenges, especially, for security-unaware users, eliminating
digital certificates was a design goal of certificate-less systems.
Self-certifying identifiers~\cite{mazieres1999separating} base on the idea of
deriving identity directly from the public-key, usually, by simply hashing it.
Such hash-names are short (20-32 bytes), global, and collision-secure, 
and
users can create them  by themselves. 
Unfortunately, these names 
are
represented  by pseudorandom strings which, despite being relatively short, are
not easily memorable by humans, thus they require a dedicated name distribution
infrastructure.

Identity-based cryptography enables to generate public-private keypairs in a
way where private keys are freely selected by users~\cite{shamir1984identity}.
Since the public keys can be human-meaningful identities themselves, these
systems do not need certificates or name discovery infrastructures. 
The main drawback of this approach is that keypairs have to
be generated by a trusted party that learns secret keys. 

\subsubsection{Blockchain-based}
Blockchain platforms were early seen as promising infrastructures for
implementing distributed identities. 
%
Namecoin~\cite{loibl2014namecoin} is a blockchain-based PKI platform allowing
users to register arbitrary identities and associate them with public keys.  The
system refutes the Zookoo's conjecture, providing memorizable names associated
with key pairs, without trusted parties.  
The community followed the Namecoin's
design,  proposing systems with additional
features~\cite{ali2016blockstack} or extended trust
models~\cite{al2017scpki,dykcik2018blockpki}.

Baars~\cite{baars2016towards} discusses the applicability of  blockchain technology to provide self-sovereign Identities, while Goodell and Aste~\cite{goodell2019decentralized} propose an architecture where besides technical aspects they point out importance of a careful regulation in such systems. AttriChain~\cite{shao2020attrichain} is a holistic scheme using a permissioned blockchain to offer  self-sovereign identities with threshold traceability and on-chain access control. All these systems require users to create and manage cryptographic keys.

Decentralized Identifiers (DIDs)~\cite{reed2020decentralized} is an attempt to
unify the management of decentralized digital identities.  With DIDs, users can
create their self-sovereign identities (associating them with their public keys)
and anchor them with a blockchain platform their trust.  DIDs are under heavy
development, and are implemented and experimentally deployed as part of the
Hyperledger project. 

\subsection{Comparison}
In \autoref{fig:comparison}, we compare different name systems
with ours in terms of the desired properties.  For the row `Collision-secure
Names', we put a negative mark, if the system introduces trusted party(ies)
managing identities.  Systems  providing this property differ in
assumptions under which the property is achieved.  For instance, self-certifying identifiers
require a collision-resistant hash function, web-of-trust PKIs need a trusted `fragment' of a
peer-to-peer network, and blockchain-based systems require that a
(super)majority of a Sybil-resistant network is honest.
We also note that our instantiation of the \name framework requires the TEE
assumption. 
Our comparison shows that \names is the
only system to achieve all the desired properties.

\section{Conclusions}
\label{sec:conclusions}
In this work, we presented \names, a framework providing password-authenticated
decentralized identities with global and human-meaningful names.  Up to our best
knowledge, it is the first system achieving these properties, and in comparison to the state-of-the-art systems, \name does not require users to use and manage cryptographic keys for authentication.
In our system, a
user registers his username and password-derived information with a confidential
smart contract and then can use these credentials to authenticate to any server.
For authentication, we combine our framework with the OPAQUE protocol, resulting
in an authentication system where even the server cannot learn the user's
password or any information leading to offline dictionary attacks against it.
We report on the implementation of our system and evaluation results.  In the future,
we plan to investigate \name management and extend our system beyond authentication (e.g., to
decentralized storage).

\section*{Acknowledgment}
This work is supported by A*STAR under its RIE2020 Advanced Manufacturing and Engineering (AME) Programmtic Programme  (Award A19E3b0099) and by Ministry of Education, Singapore, under its MOE AcRF Tier 2 grant (MOE2018-T2-1-111).

\bibliographystyle{unsrt}
\bibliography{ref}




\end{document}